\pdfoutput=1 
\documentclass[11pt]{article}
\usepackage{geometry}
 \geometry{
 a4paper,
 total={170mm,257mm},
 left=20mm,
 top=20mm,
 }
 
\providecommand{\tabularnewline}{\\}

\pdfoutput=1
\usepackage[utf8]{inputenc}
\usepackage[english]{babel}
\usepackage{booktabs}
\usepackage{csquotes}
\usepackage{amsmath}
\usepackage{amsthm}
\usepackage{amsfonts}
\usepackage{thmtools}
\usepackage{thm-restate}
\usepackage{xcolor}
\usepackage[colorlinks,linkcolor=violet,citecolor=violet,urlcolor=violet,destlabel]{hyperref}
\usepackage{cmbright}
\usepackage[capitalise,noabbrev,nameinlink]{cleveref}

\title{Counting unitaries of T-depth one}
\author{Vadym Kliuchnikov}

\usepackage{graphicx}

\newtheorem{thm}{Theorem}[section]
\newtheorem{lem}[thm]{Lemma}
\newtheorem{prop}[thm]{Proposition}

\newtheorem{dfn}[thm]{Definition}

\usepackage[style=alphabetic]{biblatex}
\addbibresource{references.bib}

\begin{document}

\maketitle

\begin{abstract}
We show that the number of T-depth one unitaries on $n$ qubits is
$
\sum_{m=1}^{n}\tfrac{1}{m!}\prod_{k=0}^{m-1}(4^n/2^k - 2^k) \times \# \mathcal{C}_n,
$
where $\#\mathcal{C}_n$ is the size of the $n$-qubit Clifford group, that is the number of unitaries of T-depth zero.
The number of  T-depth one unitaries on $n$ qubits grows as $2^{\Omega(n^2)} \cdot \# \mathcal{C}_n$.
\end{abstract}

\section{Introduction}

Recall that \textbf{$T$-depth one unitaries} (over Clifford+T gate set) on $n$-qubits are the unitaries that can be expressed (up to a global phase) as
\begin{equation}
C_1 ( T^{k_1} \otimes (T^\dagger)^{k_2} \otimes I^{n-k_1-k_2}) C_2  \label{eq:t-depth-one-definition}
\end{equation}
where $C_1$ can $C_2$ are some Clifford unitaries.
Note that ordering of $T$ and $T^\dagger$ in the expression above is not important  because $\mathrm{SWAP}$
gates are Clifford unitaries. 
Because $T^\dagger = TZS$ and $SZ$ is a Clifford, we can always choose $k_1$ = 0 or $k_2 = 0$. 
The goal of this short paper is to calculate the total number of unitaries of $T$-depth one
and establish a canonical form for Clifford unitaries of $T$-depth $d$.
These results can be used as a part of various counting arguments. For example, one may use our results to lower-bound $T$ depth required to implement an $n$-qubit reversible function in the worst case.

\section{Main result}

We prove the following result
\begin{thm}[T-depth one count] \label{thm:t-depth-one-count}
The number of T-depth one unitaries on $n$ qubits is
\begin{equation*}
\sum_{m=1}^{n}\tfrac{1}{m!}\prod_{k=0}^{m-1}(4^n/2^k - 2^k)\times \# \mathcal{C}_n, \text{ where }
\end{equation*}
$\#\mathcal{C}_n$ is the size of the Clifford group, that is the number of unitaries of T-depth zero.
\end{thm}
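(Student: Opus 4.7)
My plan is to reduce every $T$-depth-one unitary to a canonical form and then count canonical forms. Because $T=e^{i\pi/8}e^{-i\pi Z/8}$, we have (modulo global phase) $D_m:=T^{\otimes m}\otimes I^{n-m}=\exp(-i\pi/8\sum_{j=1}^m Z_j)$, and a direct manipulation gives
\[
C_1 D_m C_2 \;=\; (C_1 D_m C_1^{-1})(C_1 C_2) \;=\; \exp\!\Bigl(-\tfrac{i\pi}{8}\sum_{j=1}^m P_j\Bigr)\cdot C,
\]
where $P_j:=C_1 Z_j C_1^{-1}$ is an $m$-tuple of commuting \emph{signed} Paulis generating a rank-$m$ abelian subgroup of the Pauli group containing no nontrivial $\pm I$, and $C:=C_1 C_2\in\mathcal{C}_n$. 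I would first observe that, since $T$-count is invariant under Clifford multiplication on either side and $\mathrm{T}\text{-}\mathrm{count}(D_m)=m$, the sets $\mathcal{T}_m:=\{C_1 D_m C_2\bmod\mathrm{phase}\}$ for distinct $m\ge 1$ are pairwise disjoint, so it suffices to count each $\mathcal{T}_m$ and sum.

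\paragraph*{Canonical form and uniqueness.} For any Hermitian Pauli $P$, $e^{i\pi P/4}=\tfrac{1}{\sqrt 2}(I+iP)$ is a Clifford, and the identity $e^{i\pi P/8}=e^{-i\pi P/8}\cdot e^{i\pi P/4}$ lets me flip the sign of each $P_j$ and absorb the correction into $C$ on the right. Hence every $U\in\mathcal{T}_m$ can be written as $U=V_P\cdot C$ with $V_P:=\exp(-i\pi/8\sum P_j)$, where $\{P_1,\dots,P_m\}$ is now a set of commuting, independent \emph{positive} Paulis and $C\in\mathcal{C}_n$. The main obstacle is showing this canonical form is unique. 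The Pauli expansion
\[
V_P \;=\; \sum_{S\subseteq[m]}\bigl(-i\sin\tfrac{\pi}{8}\bigr)^{|S|}\bigl(\cos\tfrac{\pi}{8}\bigr)^{m-|S|}\prod_{j\in S}P_j
\]
shows $V_P$ is supported exactly on the $2^m$-element subgroup $\langle P_1,\dots,P_m\rangle$, with coefficient magnitudes $\cos^{m-|S|}(\pi/8)\sin^{|S|}(\pi/8)$ that are pairwise distinct across $|S|=0,1,\dots,m$. Hence the set $\{P_j\}$ (the elements at $|S|=1$) is determined by $V_P$ itself. To pass from this to uniqueness of $(\{P_j\},C)$, I need to show that $V_P$ and $V_Q$ lie in different right Clifford cosets whenever $\{P_j\}\ne\{Q_j\}$, i.e.\ that $V_Q^{-1}V_P$ is never Clifford. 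I expect this to follow from a careful analysis of the Pauli expansion of $V_Q^{-1}V_P$: its coefficient magnitudes involve $\cos(\pi/8),\sin(\pi/8)$ individually and cannot match those of any Clifford (whose Pauli coefficients lie in $\mathbb{Q}[\zeta_8]$, an extension that does not contain $\cos(\pi/8)$).

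\paragraph*{Counting.} With uniqueness in hand, the count is a standard symplectic enumeration. Under the identification of the Pauli group modulo its center with $\mathbb{F}_2^{2n}$, equipped with the symplectic form given by commutation, a rank-$k$ isotropic subspace has centralizer of size $4^n/2^k$, so the $(k{+}1)$-st commuting independent positive Pauli has $4^n/2^k-2^k$ choices (centralizer minus current span). This gives $\prod_{k=0}^{m-1}(4^n/2^k-2^k)$ ordered tuples and $N_m:=\tfrac{1}{m!}\prod_{k=0}^{m-1}(4^n/2^k-2^k)$ unordered sets. Combining with $\#\mathcal{C}_n$ Clifford choices yields $|\mathcal{T}_m|=N_m\cdot\#\mathcal{C}_n$, and summing over $m=1,\dots,n$ gives the stated formula.
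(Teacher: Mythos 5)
Your overall architecture coincides with the paper's: reduce to the canonical form $V_P\,C$ with $\{P_1,\dots,P_m\}$ commuting, independent, positive Paulis; separate different $m$ by exact T-count; and count sets of such Paulis by the symplectic argument $\frac{1}{m!}\prod_{k=0}^{m-1}(4^n/2^k-2^k)$ times $\#\mathcal{C}_n$. The reduction and the counting paragraph are fine. But the crux of the theorem is the injectivity of the map $(\{P_j\},C)\mapsto V_P C$ up to global phase (the paper's \cref{lem:distinctness}), and that is precisely the step you leave at ``I expect this to follow.'' Worse, the mechanism you sketch fails in exactly the case that matters. For the count one only needs to separate two sets of the \emph{same} size $m$ (different sizes are separated by T-count), and then $V_Q^{-1}V_P$ is a product of $2m$ factors $e^{\pm i\pi P/8}$, so every coefficient of its Pauli expansion is a sum of monomials $\cos^a(\pi/8)\sin^b(\pi/8)$ with $a+b=2m$ even. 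All such quantities lie in $\mathbb{Q}(\sqrt2)\subset\mathbb{Q}(\zeta_8)$ (e.g. $\cos^2(\pi/8)=(2+\sqrt2)/4$ and $\cos(\pi/8)\sin(\pi/8)=\sqrt2/4$), so the field-membership contrast with Clifford entries evaporates; moreover, since equality is only up to a global phase, any such argument would have to be phrased in terms of coefficient magnitudes anyway. Your observation that $\{P_j\}$ can be read off the Pauli expansion of $V_P$ itself is correct but does not help, because only the product $V_PC$ with unknown $C$ is given — as you yourself note.

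To close this gap one needs a genuinely different invariant. The paper does it through the channel representation: it shows that the rows of the channel representation of $V_PC$ containing entries $\pm1$ are exactly those indexed by Paulis commuting with all $P_j$ (\cref{lem:unit-rows}); after conjugating so that one set becomes $Z_1,\dots,Z_m$, this forces the other set to consist of diagonal Paulis supported on the first $m$ qubits, and a Hamming-weight computation of the image of $X_j$ (\cref{lem:hamming-weight}, \cref{lem:diagonal-clifford-image}) then pins down $\{P_j\}=\{Z_j\}$ and $C=I$. A repair in your language would be to use that a Clifford's nonzero Pauli coefficients all have equal magnitude $2^{-k/2}$ and to show the expansion of $V_Q^{-1}V_P$ is never flat, but that requires controlling cancellations when $\langle P_1,\dots,P_m\rangle$ and $\langle Q_1,\dots,Q_m\rangle$ intersect, and you do not supply it. Separately, your disjointness-across-$m$ step silently assumes that the T-count of $T^{\otimes m}\otimes I$ is exactly $m$; this is true but not free — the paper proves it (\cref{lem:t-count}) via the Pauli spectrum of the Choi state and the dyadic monotone of Beverland et al. — so as written that is a secondary, citable but unaddressed, gap.
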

In particular, the number of equivalence classes up to right multiplication by
a Clifford unitary is
$$
 \sum_{k=1}^m \tfrac{1}{m!} \prod_{k=0}^{m-1}(4^n/2^k - 2^k)
$$
which grows as $2^{\Omega(n^2)}$.
To prove the theorem, we explicitly count the number of $T$-depth
one unitary matrices that require $m$ $T$ gates.

Let us first introduce a canonical form for unitaries of T-depth one, expanding on \cite{Gosset2014}. 
Recall that a set of Pauli operators $\{ P_1,\ldots,P_m \}$ is \textbf{independent} when none of $P_j$ is equal to the product of a subset of $\{ P_1,\ldots,P_{j-1}$, $P_{j+1},\ldots,P_{m} \}$ up to a sign.
\begin{prop} \label{prop:canonical-form}
Any $n$-qubit unitary of T-depth one can be written (up to a global phase) as: 
\begin{equation}
\exp( i \pi P_1 /8) \ldots \exp( i \pi P_m /8) C \label{eq:canonical}
\end{equation}
where $P_1,\ldots,P_m$ are  commuting independent Pauli operators from $\{I,X,Y,Z\}^{\otimes n}$ and $C$ is an $n$-qubit Clifford unitary. Every unitary given by \cref{eq:canonical}
is a T-depth one unitary and requires at most $m$ T gates.
\end{prop}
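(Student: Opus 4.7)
The plan is to prove both inclusions by treating each $T$ gate as a single-qubit Pauli exponential and then pushing Cliffords through.

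For the forward direction, I would start from $C_1(T^{\otimes k_1}\otimes (T^\dagger)^{k_2}\otimes I^{n-k_1-k_2})C_2$, which after the simplifications noted after \cref{eq:t-depth-one-definition} reduces to $C_1(T^{\otimes m}\otimes I^{n-m})C_2$ with the $T$s on qubits $1,\ldots,m$. Substituting $T = e^{i\pi/8}\exp(-i\pi Z/8)$ and dropping the global phase turns this into $\prod_{j=1}^{m}\exp(-i\pi Q_j/8)\cdot C_1C_2$, with $Q_j = C_1 Z_j C_1^\dagger$ a signed Pauli. Since $Z_1,\ldots,Z_m$ are commuting and independent and Clifford conjugation preserves these properties modulo sign, writing $Q_j = \pm P_j$ with $P_j \in \{I,X,Y,Z\}^{\otimes n}$ produces commuting independent $P_j$. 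To normalise the exponent to $+i\pi P_j/8$, I would use the identity $\exp(-i\pi P/8) = \exp(i\pi P/8)\cdot \tfrac{1}{\sqrt{2}}(I-iP)$, observing that the correction $\tfrac{1}{\sqrt{2}}(I-iP) = \exp(-i\pi P/4)$ is a Clifford which commutes with every other $\exp(\pm i\pi P_k/8)$ because the $P_k$ pairwise commute; the same trick handles any sign flip $Q_j = -P_j$. Collecting all such Clifford corrections to the right yields the canonical form \cref{eq:canonical}.

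For the converse, I would invoke the standard fact that any commuting independent family $\{P_1,\ldots,P_m\}$ of Paulis can be simultaneously mapped to single-qubit $Z$ operators by a Clifford: there exists $D$ with $DP_jD^\dagger = Z_j$ for $j=1,\ldots,m$. Then $\prod_j \exp(i\pi P_j/8) = D^\dagger \prod_j \exp(i\pi Z_j/8)\, D = D^\dagger\bigl((T^\dagger)^{\otimes m}\otimes I^{n-m}\bigr) D$ up to a global phase, using $\exp(i\pi Z/8) = e^{i\pi/8} T^\dagger$. Composing with $C$ on the right exhibits \cref{eq:canonical} in T-depth one form using $m$ copies of $T^\dagger$, equivalently at most $m$ $T$ gates via $T^\dagger = TZS$.

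The main technical step is the simultaneous Clifford diagonalisation used in the converse. I would prove it by the usual symplectic argument: viewing Paulis modulo phase as vectors in $\mathbb{F}_2^{2n}$ equipped with the symplectic form induced by commutation, commuting independent Paulis span an $m$-dimensional isotropic subspace; extending to a symplectic basis and using surjectivity of the Clifford group onto $\mathrm{Sp}(2n,\mathbb{F}_2)$ produces $D$, while a final Pauli correction removes any residual signs so that $DP_jD^\dagger = +Z_j$. This is a standard stabiliser-formalism fact in the spirit of \cite{Gosset2014}. The remaining bookkeeping---global phases from writing $T$ as $e^{i\pi/8}\exp(-i\pi Z/8)$ and signs picked up by Clifford conjugation---is routine.
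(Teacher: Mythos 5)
Your proposal is correct and follows essentially the same route as the paper's proof: conjugating the $T$ (or $T^\dagger$) layer by $C_1$ to obtain commuting independent Pauli exponentials, normalising signs with the Clifford factors $\exp(\mp i\pi P/4)$ absorbed into the right-hand Clifford, and for the converse invoking the standard fact that a Clifford maps $\{P_1,\ldots,P_m\}$ simultaneously to single-qubit $Z$ operators. The only differences are cosmetic (you target $+Z_j$ where the paper targets $-Z_j$, and you sketch the symplectic argument the paper cites as well known), so no changes are needed.
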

\begin{proof}
Recall that $T^\dagger = \exp(i \pi Z/8)$ up to a global phase
and that $C \exp(i \pi Z_1/8) C^\dagger = \exp(i \pi P/8)$,
where $P = C Z_1 C^\dagger$ is Pauli operator, in other words
$P$ is the result of conjugating $Z_1$ by Clifford $C$.
By rewriting \cref{eq:t-depth-one-definition}
$$
C_1 ( (T^\dagger)^{m} \otimes I^{n-m}) C_2 =
C_1\left(( T^\dagger)^{m} \otimes I^{n-m}\right) C^\dagger_1 C_1 C_2 = \exp( i \pi P_1 /8) \ldots \exp( i \pi P_m /8) C_1 C_2
$$
we see that any such unitary can be written as product of $m$
exponents $\exp(i \pi P_j /8)$ where
$\{ P_1, \ldots, P_m \}$ is a set of commuting independent Pauli operators, that is each $P_j$ is from $\pm\{I,X,Y,Z\}^{\otimes n}$.
The commutation and independence follow because the set
$\{ P_1, \ldots, P_k \}$  is obtained by conjugating another set of commuting independent Pauli operators $Z_1,\ldots, Z_m$ by Clifford $C$.
Note that it is well-known that for every such set of Pauli operators there exist a Clifford unitary $C_3$ such that $C_3 P_j C_3^\dagger =-Z_j$, where
$Z_j$ is the $n$-qubit Pauli matrix with $Z$ on qubit $j$ and identity
on the rest of the qubits.
For this reason, any unitary expressed as \cref{eq:canonical}
is a T-depth one unitary with $m$ T gates when $P_j$ are commuting independent Pauli operators and $C$  is an arbitrary Clifford unitary. Moreover, we can choose 
all $P_j$ to be from the set of Pauli matrices $\{I,X,Y,Z\}^{\otimes n}$ (that is 
always with $+$ in front of them). This is because $\exp( - i \pi  P/8) = \exp( - i \pi  P/4) \exp(i \pi  P/8)$ and $\exp( - i \pi  P/4)$ is a Clifford unitary.
\end{proof}

\cref{thm:t-depth-one-count} is a corollary of the following two results we will prove later.
\begin{lem}[Distinctness] \label{lem:distinctness}
Let $n \ge 1$ and $\mathcal{P} = \{ P_1, \ldots, P_m\}$,$\mathcal{Q} = \{ Q_1, \ldots, Q_m\}$
be two sets of independent commuting $n$-qubit Pauli operators and $C_1,C_2$ be two 
$n$-qubit Clifford unitaries. Then unitaries
\begin{equation} \label{eq:canonical-equality}
\exp( i \pi P_1 /8) \ldots \exp( i \pi P_m /8) C_1 = \exp( i \pi Q_1 /8) \ldots \exp( i \pi Q_m /8) C_2 \text{ (up to a global phase)}
\end{equation}
if and only if $\mathcal{P} = \mathcal{Q}$ as sets and $C_1 = C_2$ up to a global phase.
\end{lem}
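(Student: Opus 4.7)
The ``if'' direction is straightforward: when $\mathcal{P} = \mathcal{Q}$ as sets, pairwise commutativity of the $P_j$'s makes the factors $\exp(i\pi P_j/8)$ commute, so the product $\prod_j \exp(i\pi P_j/8)$ depends only on the underlying set, not on the ordering. Thus the two exponential products in \cref{eq:canonical-equality} are equal, and the equation reduces to $C_1 = C_2$ up to a global phase.

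For the converse, I would set $U := \prod_i \exp(i\pi P_i/8)$ and $V := \prod_j \exp(i\pi Q_j/8)$, and rewrite the hypothesis as $U = e^{i\phi} V C$ with $C := C_2 C_1^\dagger$ a Clifford. The centerpiece is the conjugation formula: for any Pauli $R$, using $P_j^2 = I$ and the commute/anticommute dichotomy, one checks that
\[
U R U^\dagger = R \prod_{j \in A_R} \exp(-i\pi P_j/4), \qquad A_R := \{j : \{P_j, R\} = 0\},
\]
the product being order-independent because the $P_j$'s commute. Expanding each factor as $\exp(-i\pi P_j/4) = (I - iP_j)/\sqrt 2$ and invoking independence of the $P_j$'s yields a Pauli decomposition of $URU^\dagger$ supported on exactly $2^{|A_R|}$ distinct signed Paulis.

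From $U = e^{i\phi} V C$ we obtain the exact identity $URU^\dagger = V(CRC^\dagger)V^\dagger$. The plan is to specialize, for each $i \in [m]$, to a Pauli $R$ with $A_R = \{i\}$. Such an $R$ exists because the $P_j$'s correspond to $m$ linearly independent isotropic vectors in the symplectic space $\mathbb{F}_2^{2n}$ associated with the Pauli group modulo its center; non-degeneracy of the symplectic form then supplies a Pauli anticommuting with $P_i$ alone. With this choice the left-hand side is the clean two-term expansion $(R - i R P_i)/\sqrt 2$. The right-hand side must have the same shape, which forces $|B_{CRC^\dagger}| = 1$ (where $B_{R'} := \{j : \{Q_j, R'\} = 0\}$); letting $j_0$ be the unique index, it equals $(CRC^\dagger - i(CRC^\dagger)Q_{j_0})/\sqrt 2$. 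Reading off coefficients on the orthonormal basis $\{I,X,Y,Z\}^{\otimes n}$ (with $R$ chosen to be a positive Pauli) pins down $CRC^\dagger = R$ and $P_i = Q_{j_0}$. Running this for every $i$ and using distinctness of the $P_i$'s coming from independence gives $\mathcal{P} \subseteq \mathcal{Q}$, hence $\mathcal{P} = \mathcal{Q}$ by cardinality; the ``if'' argument above then yields $C_1 = C_2$ up to phase.

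The main obstacle is the sign-and-phase bookkeeping when matching the two two-term expansions: $CRC^\dagger$ is a priori only a signed Pauli, and a product like $RP_i$ can acquire a $\pm i$ phase from qubit-wise Pauli multiplication. The ``crossed'' matching, in which $R$ pairs with $(CRC^\dagger)Q_{j_0}$ rather than with $CRC^\dagger$, leads through a short computation with these phases to $P_i = -Q_{j_0}$, which is impossible because by \cref{prop:canonical-form} we may take $P_i$ and $Q_{j_0}$ to be positive Paulis. That case analysis is the step I would verify most carefully.
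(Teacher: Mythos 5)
Your argument is correct, but it takes a genuinely different route from the paper. The paper first multiplies through by Cliffords to reduce to a comparison with $\exp(i\pi Z_1/8)\ldots\exp(i\pi Z_m/8)C$, then invokes the channel representation: \cref{lem:unit-rows} (which rows of the channel matrix contain $\pm 1$) forces the conjugated $P_k$ to be diagonal and supported on the first $m$ qubits, and the diagonal case is settled by \cref{lem:diagonal-equality} via the Hamming-weight count of \cref{lem:hamming-weight}. You instead work directly at the operator level: the identity $URU^\dagger = R\prod_{j\in A_R}\exp(-i\pi P_j/4)$ together with independence gives a Pauli support of size exactly $2^{|A_R|}$, and symplectic non-degeneracy over $\mathbb{F}_2^{2n}$ lets you choose, for each $i$, a test Pauli $R$ anticommuting with $P_i$ alone; matching the two two-term expansions then identifies $P_i$ with some $Q_{j_0}$ (your computation of the crossed case, giving $Q_{j_0}=-P_i$, is right). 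Your support-size computation is essentially \cref{lem:hamming-weight} in coordinate-free form, but you avoid the channel representation, the unit-rows lemma, and the reduction to the diagonal case altogether, treating all indices symmetrically; the paper's route, in exchange, reuses machinery it needs anyway for \cref{lem:t-count}. One point to tighten: the exclusion of the crossed matching needs the hypothesis that $P_1,\ldots,P_m$ and $Q_1,\ldots,Q_m$ lie in $\{I,X,Y,Z\}^{\otimes n}$ (positive Paulis); this should be stated as part of the lemma's hypotheses rather than imported from \cref{prop:canonical-form}, since with signed Paulis the statement is false (e.g.\ $\mathcal{P}=\{Z\}$, $\mathcal{Q}=\{-Z\}$, absorbing $\exp(-i\pi Z/4)$ into the Clifford). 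The paper makes the same positivity assumption implicitly, so this is a shared convention rather than a defect of your approach.
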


\begin{restatable}[T-count]{lem}{tcount} 
\label{lem:t-count}
Any unitary
\begin{equation*}
\exp( i \pi P_1 /8) \ldots \exp( i \pi P_m /8) C 
\end{equation*}
where $P_1,\ldots,P_m$ are  commuting independent Pauli operators from $\pm \{I,X,Y,Z\}^{\otimes n}$ and $C$ is an $n$-qubit Clifford unitary
requires exactly 
$m$ T gates.
\end{restatable}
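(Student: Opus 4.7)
The upper bound $\le m$ on the $T$-count of $U = \exp(i\pi P_1/8)\cdots\exp(i\pi P_m/8)C$ is immediate from \cref{prop:canonical-form} read in reverse: the canonical form is itself a Clifford$+T$ implementation using $m$ $T$ gates. The content of the lemma is therefore the matching lower bound. My plan is to introduce, for every $n$-qubit unitary $V$, the set
$$
\mathcal{S}_V \;:=\; \{\, P \in \mathcal{P}_n : V P V^\dagger \in \mathcal{P}_n \,\},
$$
where $\mathcal{P}_n$ denotes the $n$-qubit Pauli group, and to use $|\mathcal{S}_V|$ as a $T$-count lower-bound invariant.

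First I would prove three structural facts: (i) $\mathcal{S}_V$ is a subgroup of $\mathcal{P}_n$, because conjugation by $V$ is a homomorphism and products of Paulis are Paulis; (ii) multiplying $V$ on either side by a Clifford leaves $|\mathcal{S}_V|$ unchanged, since Cliffords permute $\mathcal{P}_n$ bijectively under conjugation; (iii) multiplying $V$ by a single $T$ gate $T_i = \exp(-i\pi Z_i/8)$ at most halves $|\mathcal{S}_V|$. The observation driving (iii) is that $T_i P T_i^\dagger = P$ whenever $[P,Z_i]=0$, so the subgroup $\mathcal{S}_V \cap \{P:[P,Z_i]=0\}$, which has index at most $2$ in $\mathcal{S}_V$, embeds directly into $\mathcal{S}_{V T_i}$ (symmetrically on the left, using the condition $[V P V^\dagger, Z_i]=0$ in place of $[P, Z_i]=0$). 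Starting from $|\mathcal{S}_I| = |\mathcal{P}_n| = 4^{n+1}$ and applying (i)--(iii) along any Clifford$+T$ implementation of $U$ gives $|\mathcal{S}_U| \ge 4^{n+1}/2^k$ whenever $U$ admits a circuit with $k$ $T$ gates.

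To close the argument I would compute $\mathcal{S}_U$ for the canonical form directly. Using $\exp(i\pi P/8)\,Q\,\exp(-i\pi P/8)=Q$ when $[P,Q]=0$ and $\tfrac{1}{\sqrt{2}}(Q - iQP)$ otherwise, combined with the pairwise commutativity of the $P_j$, iteration shows that $U$ sends a Pauli $CQC^\dagger$ back to a Pauli if and only if $Q$ commutes with every $P_j$; hence $\mathcal{S}_U = C^\dagger N C$, where $N \subseteq \mathcal{P}_n$ is the centralizer of $\{P_1,\ldots,P_m\}$. Since the $P_j$ form a commuting independent set, their images in $\mathcal{P}_n/\{\pm 1,\pm i\}\cong \mathbb{F}_2^{2n}$ span an $m$-dimensional isotropic subspace under the symplectic form defined by commutation, so $N$ has codimension $m$ and $|\mathcal{S}_U| = 4\cdot 2^{2n-m} = 4^{n+1}/2^m$. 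Plugging into the preceding inequality forces $k \ge m$.

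The main technical subtlety is fact (iii). A Pauli $P \in \mathcal{S}_V$ that anticommutes with $Z_i$ gives a non-Pauli $T_i P T_i^\dagger = \tfrac{1}{\sqrt{2}}(P - iPZ_i)$, whose further conjugation by $V$ might accidentally collapse back to an element of $\mathcal{P}_n$. My plan sidesteps this question entirely by using only the commutant-of-$Z_i$ subgroup as a witness for the factor-of-two bound; any accidental contributions can only enlarge $\mathcal{S}_{V T_i}$ and therefore do not weaken the inequality, so no case analysis of the anticommuting side is required.
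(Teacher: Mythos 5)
Your proposal is correct, but it takes a genuinely different route from the paper. The paper proves the lower bound by passing to the Choi state $|U\rangle$, citing \cite{Beverland2020}: the Pauli spectrum of $|U\rangle$ equals the set of absolute values of entries of the channel representation of $U$, this spectrum coincides with that of $T^{\otimes m}\otimes I_{2^{n-m}}$, hence the dyadic monotone satisfies $\mu_2|U\rangle = m/2$ and at least $m$ T states are needed. You instead build an elementary, self-contained invariant: the subgroup $\mathcal{S}_V$ of Pauli operators mapped to Pauli operators under conjugation by $V$ (essentially a unitary stabilizer-nullity argument). Your three facts are sound --- $\mathcal{S}_V$ is a subgroup, Clifford multiplication preserves its size, and a single $T$ gate at most halves it because the centralizer of $Z_i$ has index two and its (pre)image under the conjugation homomorphism embeds into the new set, with the one-sided inclusion being exactly what a lower bound needs --- and the terminal computation $\#\mathcal{S}_U = 4^{n+1}/2^m$ matches the isotropic-subspace dimension count. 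What your approach buys is a proof that needs no external monotone machinery and no Choi-state detour, working gate-by-gate in the unitary circuit model; what the paper's approach buys is a strictly stronger conclusion (a lower bound on the number of T \emph{states} needed to prepare $|U\rangle$, hence robustness to ancillas, measurement and state injection) at the price of citing \cite{Beverland2020}, and it reuses the channel-representation infrastructure already developed for \cref{lem:unit-rows}. One small point to tighten: in your computation of $\mathcal{S}_U$, the ``only if'' direction (a Pauli anticommuting with some $P_j$ is \emph{not} sent back to a Pauli) needs the independence of $P_1,\ldots,P_m$, not just their pairwise commutativity, since otherwise the $2^{|S|}$ terms in the expansion of $Q\prod_{j\in S}\tfrac{1}{\sqrt2}(I-iP_j)$ could coincide and collapse to a single Pauli; you invoke independence only afterwards for the dimension count, but the same observation (as in the paper's \cref{lem:hamming-weight}) closes this gap immediately.
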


\begin{proof}[Proof of~\cref{thm:t-depth-one-count}]
Let $N_{m,n}$ be the number of $T$ depth one unitaries on $n$ qubits that require $m$ $T$ gates. The total number of $T$ depth one unitaries is $\sum_{m=1}^n N_{m,n}$. 
It remains to derive expression for $N_{m,n}$. 
According to \cref{lem:t-count} and \cref{prop:canonical-form}
every $T$-depth one unitary with $m$ T gates can be expressed by \cref{eq:canonical}. 
According to \cref{lem:distinctness}, distinct sets of independent commuting Pauli operators $\{ P_1, \ldots, P_m \}$ and Clifford unitaries $C$ correspond to distinct unitaries in \cref{eq:canonical}. For this reason 
$$
N_{m,n} = \prod_{k=0}^{m-1}(4^n/2^k - 2^k) / m! \cdot \# \mathcal{C}_n
$$
To derive above expression we used the fact that there are  $\prod_{k=0}^{m-1}(4^n/2^k - 2^k)$ $m$-tuples of commuting independent Pauli 
operators (without signs). For more details see  the proof of \href{https://arxiv.org/pdf/quant-ph/0406196.pdf\#page=3}{Proposition~2} in \cite{Aaronson2004}. We divide by $m!$ to account for possible permutations 
of tuples because we need to count the distinct sets. Finally, we multiply by 
$\# \mathcal{C}_n$ to account for all possible $C$ in \cref{eq:canonical}.
\end{proof}

In our proofs we rely on \textbf{the channel representation} $\hat{U}$~\cite{Gosset2014} of a $n$-qubit unitary $U$.
It is $4^n\times 4^n$ real matrix with rows and columns indexed by $n$-qubit Pauli matrices $\{I,X,Y,Z\}^{\otimes n}$,
where the entry of $\hat{U}$ is defined as
$$
\hat{U}_{P,Q} = \frac{1}{2^n} \mathrm{Tr}(PUQU^\dagger)
$$
It has been shown that the channel representation of any
unitary matrix is a real orthogonal matrix and channel representation
of a Clifford matrix is a  signed permutation matrix(product of a permutation matrix and a diagonal matrix with $\pm 1$ on the diagonal).
We start with proving a special case of \cref{lem:distinctness}.

For bit-string $a \in \{0,1\}^n$ of length $n$ let us introduce notation
\begin{equation} \label{eq:bit-string-power}
Z^a = Z^{a(1)}\otimes \ldots \otimes Z^{a(n)}, \text{ where } a = a(1),a(2),\ldots,a(n)
\end{equation}
We call Pauli operators $Z^a$ \textbf{positive diagonal Pauli operators}.

\begin{lem}[Diagonal equality] \label{lem:diagonal-equality} 
Let $P_1,\ldots,P_n$ be positive diagonal independent commuting $n$-qubit Pauli operators
and let $C$ be an $n$-qubit Clifford matrix,
then equality
$$
\exp( i \pi P_1 /8) \ldots \exp( i \pi P_n /8) = \exp( i \pi Z_1 /8) \ldots \exp( i \pi Z_n /8) C
$$
up to a global phase implies that set $\{ P_1, \ldots, P_n \}$ is equal to set
$\{ Z_1, \ldots , Z_n \}$
and $C$ is the identity up a global phase.  
\end{lem}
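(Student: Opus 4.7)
The plan is to first reduce the statement to showing that the $\mathbb{F}_2$-matrix with columns $\{a_j\}$ (where $P_j = Z^{a_j}$) is a permutation matrix; the conclusion $C = e^{i\theta} I$ then follows almost immediately. I will extract the permutation property from the channel representation by matching Pauli supports on the two sides.

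First, I would note that since every $P_j$ is diagonal, both sides of the displayed equation are diagonal operators. Multiplying on the left by the invertible diagonal $\bigl(\prod_j \exp(i\pi Z_j/8)\bigr)^{-1}$ and absorbing the global phase, $C$ must itself be diagonal, so it commutes with every $Z^a$. Moreover, each $C X_j C^\dagger$ is a Pauli and must be of the form $\mu_j X_j Z^{c_j}$ (flipping only bit $j$), so conjugation by $C$ preserves the $X$-part of any Pauli $X^u Z^v$.

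Next I would pass to the channel representation $\hat{\cdot}$ (which ignores global phase). Evaluating on $Q = X^u Z^v$ and using that $\exp(i\pi P/8)\, Q\, \exp(-i\pi P/8)$ equals $Q$ when $[P,Q] = 0$ and equals $\exp(i\pi P/4)\,Q$ when $\{P,Q\} = 0$, together with mutual commutativity of the $P_j$'s, I will obtain
\[
\widehat{\mathrm{LHS}}(Q) \;=\; \frac{1}{2^{|S_u|/2}} \sum_{T \subseteq S_u} i^{|T|}\, Z^{b_T}\, Q ,
\]
where $S_u = \{j : a_j \cdot u = 1\}$ and $b_T = \sum_{j \in T} a_j$. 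Independence of the $a_j$ ensures that the $2^{|S_u|}$ Paulis $Z^{b_T} Q$ are distinct, all with $X$-part $u$. The analogous calculation for the right-hand side (with $\{Z_j\}$ replacing $\{P_j\}$, and $C Q C^\dagger$ also having $X$-part $u$) yields a superposition of exactly $2^{|u|}$ distinct Paulis with $X$-part $u$.

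Matching supports then forces $|S_u| = |u|$ for every $u \in \{0,1\}^n$. Letting $A$ be the $\mathbb{F}_2$-matrix with columns $a_1, \dots, a_n$, this reads $|A^{\top} u| = |u|$ for all $u$; setting $u = e_i$ forces every row of $A$ to have Hamming weight one, and invertibility of $A$ (from independence of the $a_j$) then pins $A$ to a permutation matrix, so $\{P_j\} = \{Z_j\}$ as sets. The two commuting exponential products are therefore equal as operators, and the hypothesis collapses to $e^{i\theta} U = U C$ with $U$ invertible, giving $C = e^{i\theta} I$. The main obstacle will be carrying out the channel-representation expansion cleanly enough to read off the support sizes; everything else is routine bookkeeping plus the elementary observation that a weight-preserving invertible $\mathbb{F}_2$-linear map must be a permutation.
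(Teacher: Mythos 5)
Your proposal is correct and follows essentially the same route as the paper: reduce to the case of a diagonal Clifford $C$, observe that conjugation by such a $C$ sends $X_j$ to $\mu_j X_j Z^{c_j}$, count the size of the Pauli-basis support ($2^{|S_u|}$ versus $2^{|u|}$) of the conjugated image on both sides, and conclude from the resulting weight-one rows plus invertibility that the exponent matrix is a permutation, whence $\{P_j\}=\{Z_j\}$ and $C$ is a phase. Your expansion for general $Q=X^uZ^v$ is a mild generalization of the paper's Hamming-weight lemma (which treats only $Q=X_j$), but since you ultimately specialize to $u=e_i$, the argument coincides with the paper's proof.
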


In the proof, we will use the following lemmas that we prove in \cref{sec:helpful-lemmas}.

\begin{restatable}[Diagonal Clifford image]{lem}{diagonalcliffordimage} 
\label{lem:diagonal-clifford-image}
Let $C$ be a diagonal $n$-qubit Clifford unitary, then
$C X_j C^\dagger = i^k X_j Z^a$ for some integer $k$
and bit-string $a \in \{0,1\}^n$.
\end{restatable}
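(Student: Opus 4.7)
The plan is to combine two structural observations about $Q := CX_jC^\dagger$: first, because $C$ is diagonal in the computational basis, it commutes with every $Z_k$, so $CZ_kC^\dagger = Z_k$; second, because $C$ is Clifford, $Q$ lies in the Pauli group, i.e.\ it is a tensor product of single-qubit Paulis multiplied by an overall phase in $\{1, i, -1, -i\}$. The task then reduces to pinning down the tensor-factor structure of $Q$.

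The main step is to read off the commutation relations of $Q$ with each $Z_k$. For any $k$ one computes
$$
Q Z_k \;=\; C X_j C^\dagger Z_k \;=\; C X_j (C^\dagger Z_k C) C^\dagger \;=\; C X_j Z_k C^\dagger,
$$
using $CZ_kC^\dagger = Z_k$. Hence $QZ_k = -Z_k Q$ when $k = j$ (since $X_j$ and $Z_j$ anticommute) and $QZ_k = Z_k Q$ otherwise. For a Pauli this forces the tensor factor on qubit $j$ to anticommute with $Z$, hence to be $X$ or $Y$, and the factor on every other qubit to commute with $Z$, hence to be $I$ or $Z$.

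The last step is routine: invoking the identity $Y = iXZ$, any such tensor product rewrites in the form $i^k X_j Z^a$ for some integer $k$ and bit-string $a \in \{0,1\}^n$, where the overall Pauli phase from the Clifford action and the $i$'s picked up from each $Y \to iXZ$ conversion are absorbed into $i^k$. I do not foresee a real obstacle; the only point requiring care is consistent bookkeeping of the phase exponent $k$ across both the ambient Pauli-group phase and the $Y$-to-$XZ$ conversions.
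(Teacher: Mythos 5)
Your proof is correct and follows essentially the same route as the paper: use that a diagonal $C$ fixes each $Z_k$ under conjugation, then read off the commutation relations of $CX_jC^\dagger$ with the $Z_k$ to constrain its tensor factors. You are in fact slightly more complete than the paper, which only invokes commutation for $k\ne j$ and leaves the anticommutation argument pinning the $j$-th factor to $X$ or $Y$ implicit.
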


\begin{restatable}[Hamming weight]{lem}{hammingweight} 
\label{lem:hamming-weight}
Let $a_1,\ldots,a_k$ be a set of linear independent bit-strings (as vectors over $\mathbb{F}_2$) and let
$$
M = \exp( i \pi Z^{a_1} /8) \ldots \exp( i \pi Z^{a_k} /8)  X_j  \exp( -i \pi Z^{a_1} /8) \ldots \exp( -i \pi Z^{a_k} /8)
$$
Consider $m_P$ to be coefficients of expanding  $M$ in a Pauli basis as defined below
\begin{equation} \label{eq:pauli-basis}
 M = \sum_{P \in \{I,X,Y,Z\}^{\otimes n}} m_P P, 
\end{equation}
Then the Hamming weight of $m_P$ is equal to $2^w$,
where $w$ is the Hamming weight of $a_1(j),\ldots,a_k(j)$.
\end{restatable}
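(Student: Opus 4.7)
The plan is to track, step by step, how the Pauli expansion of $X_j$ evolves under conjugation by the factors $\exp(i\pi Z^{a_\ell}/8)$, one at a time. The key identity is that for any Pauli operator $A$ with $\{A, Z^{a_\ell}\} = 0$ we have $\exp(i\pi Z^{a_\ell}/8)\, A\, \exp(-i\pi Z^{a_\ell}/8) = \tfrac{1}{\sqrt{2}}(A - i A Z^{a_\ell})$, while $A$ is fixed if it commutes with $Z^{a_\ell}$. Define the intermediate operators $M_0 = X_j$ and $M_\ell = \exp(i\pi Z^{a_\ell}/8)\, M_{\ell-1}\, \exp(-i\pi Z^{a_\ell}/8)$.

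First I would prove by induction on $\ell$ that $M_\ell$ is a linear combination of operators of the form $X_j Z^b$, where $b$ ranges over the $\mathbb{F}_2$-span of $\{a_i : i \le \ell,\ a_i(j)=1\}$. Since $Z^{a_\ell}$ commutes with every $Z^b$, the (anti)commutation of $Z^{a_\ell}$ with a term $X_j Z^b$ is determined entirely by $a_\ell(j)$. Consequently, if $a_\ell(j)=0$ the sum is preserved, and if $a_\ell(j)=1$ each term $c\, X_j Z^b$ splits into $\tfrac{c}{\sqrt{2}}(X_j Z^b - i\, X_j Z^{b + a_\ell})$ with both coefficients nonzero. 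Letting $S = \{\ell : a_\ell(j) = 1\}$, iterating this observation shows that $M = M_k$ is a sum of $2^{|S|}$ terms of the form $c_T\, X_j Z^{b_T}$ indexed by subsets $T \subseteq S$, where $b_T = \sum_{\ell \in T} a_\ell$ and $c_T \in \{\pm 2^{-|S|/2}, \pm i\, 2^{-|S|/2}\}$.

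Next I need to verify that no cancellation occurs when $M$ is rewritten in the $\{I,X,Y,Z\}^{\otimes n}$ basis. Because $a_1,\ldots,a_k$ are linearly independent over $\mathbb{F}_2$ and $S \subseteq [k]$, the sums $b_T$ are pairwise distinct as $T$ varies over subsets of $S$. Therefore the positive diagonal operators $Z^{b_T}$ are pairwise distinct, and so the operators $X_j Z^{b_T}$ are pairwise distinct elements of $\{I,X,Y,Z\}^{\otimes n}$ up to overall phases in $\{\pm 1, \pm i\}$ (the phase $-i$ arising precisely from the factor $X_j Z_j = -i Y_j$ when $b_T(j) = 1$). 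Since each $c_T$ has unit modulus $2^{-|S|/2}$, no two terms can sum to zero after absorbing their phases, and the Pauli expansion of $M$ has exactly $2^{|S|} = 2^w$ nonzero coefficients.

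The main subtlety I expect is the phase bookkeeping together with the verification that distinct subsets $T$ really produce distinct Pauli operators; this is where the hypothesis that $a_1,\ldots,a_k$ are linearly independent enters decisively. The rest of the argument reduces to routine induction using the single identity for conjugation by $\exp(i\pi Z^{a_\ell}/8)$.
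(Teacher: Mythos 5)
Your proof is correct and follows essentially the same route as the paper's: conjugate by the exponentials one at a time using the commuting/anticommuting dichotomy governed by $a_\ell(j)$, obtain a sum over subsets of $\{\ell : a_\ell(j)=1\}$, and invoke linear independence of the $a_\ell$ to see the resulting terms $X_j Z^{b_T}$ are distinct Paulis with nonzero coefficients. You merely spell out more explicitly the no-cancellation and phase bookkeeping that the paper leaves implicit (and your sign in the conjugation identity is in fact the correct one), so there is nothing substantive to add.
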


We proceed to prove \cref{lem:diagonal-equality}.

\begin{proof}[Proof of~\cref{lem:diagonal-equality}]
Consider image $M$ of $X_j$ under
$$
\exp( i \pi Z_1 /8) \ldots \exp( i \pi Z_n /8) C
$$
Note that Clifford $C$ must be diagonal because it can be expressed as a product of diagonal matrices.
Because Clifford $C$ is diagonal, according to \cref{lem:diagonal-clifford-image} $M$ is equal to the image of $i^{k'}X_j Z^{a'}$
under conjugation by
$$
U = \exp( i \pi Z_1 /8) \ldots \exp( i \pi Z_n /8)
$$
for some bit-string $a'$ and integer $k'$. That is $M$ is equal
to $i^{k'} Z^{a'} UX_jU^\dagger$.

Define $m_P$ to be the expansion of
$M$ in Pauli basis as in \cref{eq:pauli-basis}.
The Hamming weight of $m_P$ is equal to two.
This is because Hamming weight of the expansion in Pauli basis of $UX_jU^\dagger$ is two according to \cref{lem:hamming-weight},
and the fact that hamming weight of expansions of $M$ and $UX_jU^\dagger= M (-i)^{k'} Z^{a'}$ are the same.

Let us represent $P_k$  as $Z^{a_k}$ for bit-strings $a_k$.
Pauli operators $Z^{a_k}$ are independent if and only if bit-strings $a_k$
are linearly independent as vectors over $\mathbb{F}_2$.
Now, using \cref{lem:hamming-weight} again
we see that the Hamming weight of bit-strings
$a_1(j), \ldots, a_n(j)$ must be one for all $j$.
Because all $P_k$ are independent, the only possible
way for this to happen is if set $\{ P_1,\ldots, P_n\}$
is equal to the set $\{ Z_1,\ldots, Z_n\}$ and Clifford $C$ is
identity up to a global phase. This completes the proof.
\end{proof}

The proof of \cref{lem:distinctness} relies on the following lemma shown in \cref{sec:helpful-lemmas}.
\begin{restatable}[Unit rows]{lem}{unitrows} 
\label{lem:unit-rows}
Let $\{ P_1, \ldots, P_m\}$ be a set of commuting independent $n$-qubit Pauli operators 
and let $C$ be $n$-qubit Clifford unitary.
In channel representation of 
$$
\exp( i \pi P_1 /8) \ldots \exp( i \pi P_m /8) C
$$
the only rows $\pm 1$ are the ones indexed by
$$
 \left\{ P: 
 P \in \{I,X,Y,Z\}^{\otimes n} \text{ and } P \text{ commutes with } P_1,\ldots,P_m \right\}
$$
\end{restatable}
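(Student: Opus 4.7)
The plan is to first remove the Clifford factor $C$ using multiplicativity of the channel representation, then to compute the Pauli expansion of $U^\dagger P U$ for $U = \exp(i\pi P_1/8)\cdots\exp(i\pi P_m/8)$ directly and count its nonzero coefficients in terms of how many $P_j$ anticommute with $P$.

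First, since $\widehat{UC} = \hat U \hat C$ and $\hat C$ is a signed permutation matrix, right-multiplication by $\hat C$ only permutes the entries within each row and toggles their signs, preserving both the row index and the property of being a signed unit vector (which, given orthogonality of $\hat{U}\hat{C}$, coincides with the row containing a $\pm 1$ entry). Hence it suffices to treat the case $C = I$.

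Next I will use the identity $\sum_Q \hat U_{P,Q}\,Q = U^\dagger P U$, which follows immediately from the definition of $\hat U$ together with the Pauli expansion $A = 2^{-n}\sum_Q \mathrm{Tr}(AQ)\,Q$. So row $P$ is a signed unit vector if and only if $U^\dagger P U = \pm Q$ for a single Pauli $Q$. I will then compute $\exp(-i\pi P_j/8)\,P\,\exp(i\pi P_j/8)$ for each $P_j$: if $P_j$ commutes with $P$ it equals $P$, and if $P_j$ anticommutes with $P$ it equals $\tfrac{1}{\sqrt 2}(P - iP_jP)$, a balanced linear combination of two Paulis.

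Letting $P_{j_1},\ldots,P_{j_\ell}$ enumerate the $P_j$ anticommuting with $P$, I will show by induction on the number of conjugations applied that $U^\dagger P U$ takes the form $2^{-\ell/2}\sum_{S\subseteq\{j_1,\ldots,j_\ell\}}\epsilon_S \bigl(\prod_{j\in S}P_j\bigr)P$ with coefficients $\epsilon_S$ of unit modulus. The induction hinges on a uniform doubling step: because the $P_j$ pairwise commute and each $P_{j_t}$ anticommutes with $P$, every intermediate term $\bigl(\prod_{j\in S}P_j\bigr)P$ anticommutes with every as-yet-unused $P_{j_t}$, so each conjugation doubles the number of terms with magnitudes unchanged.

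The main obstacle I expect is ruling out cancellations between these $2^\ell$ terms, i.e., showing that the Paulis $\bigl(\prod_{j\in S}P_j\bigr)P$ are pairwise distinct up to sign as $S$ varies. This will follow directly from the independence hypothesis on $\{P_1,\ldots,P_m\}$: distinct subsets yield distinct subset products up to sign, and right-multiplication by the fixed invertible $P$ preserves distinctness. Once this is in hand, row $P$ of $\hat U$ has exactly $2^\ell$ nonzero entries of common magnitude $2^{-\ell/2}$, so it is a signed unit vector precisely when $\ell = 0$, i.e., when $P$ commutes with every $P_j$, which is the claim.
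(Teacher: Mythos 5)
Your argument is correct, but it takes a genuinely different route from the paper. The paper first computes the channel representation explicitly in the special case $P_j = Z_j$, $C = I$, where it factors as $R^{\otimes m}\otimes I$ with $R$ the known channel representation of $\exp(i\pi Z/8)$, and then transports the conclusion to general $\{P_1,\ldots,P_m\}$ by conjugating with a Clifford $C_1$ satisfying $C_1 Z_j C_1^\dagger = P_j$, using that such conjugation simultaneously permutes rows and columns while preserving (anti)commutation; the Clifford factor $C$ is then absorbed exactly as you do, via the signed-permutation property. You instead work with the general $P_j$ directly: identifying row $P$ of $\hat U$ with the Pauli expansion of $U^\dagger P U$, you show it consists of $2^{\ell}$ terms of common magnitude $2^{-\ell/2}$, where $\ell$ counts the $P_j$ anticommuting with $P$, and you correctly flag that independence is what forbids cancellations among the subset products $\bigl(\prod_{j\in S}P_j\bigr)P$ (commutativity of the $P_j$ also guarantees these products are Hermitian, so only $\pm$ phases arise). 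What each approach buys: the paper's reduction sidesteps the cancellation issue entirely, since in the $Z_j$ case the structure is manifest, but it leans on the explicit form of $R$ from the cited reference and on the existence of the mapping Clifford $C_1$; your computation is self-contained and gives a sharper quantitative statement (every nonzero entry of row $P$ has magnitude exactly $2^{-\ell/2}$), and it essentially parallels the direct-expansion technique the paper uses elsewhere, in \cref{lem:hamming-weight}, rather than in its proof of \cref{lem:unit-rows}.
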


\begin{proof}[Proof of \cref{lem:distinctness}]
Our goal is to reduce the proof to \cref{lem:diagonal-equality}.
Clearly, with $C_3 = C_1 C_2^\dagger$ we have
$$
\exp( i \pi P_1 /8) \ldots \exp( i \pi P_n /8) C_3 = \exp( i \pi Q_1 /8) \ldots \exp( i \pi Q_n /8) \text{ (up to global phase)}.
$$
Let now $C_4$ be a Clifford such that $C_4 Q_k C_4^\dagger = Z_k$, for $k=1,\ldots,m$,
and let $P'_k = C_4 P_k C_4^\dagger$. We conjugate equation above by $C_4$ and get
$$
\exp( i \pi P'_1 /8) \ldots \exp( i \pi P'_m /8) C_4 C_3 C_4^\dagger = \exp( i \pi Z_1 /8) \ldots \exp( i \pi Z_m /8)  \text{ (up to global phase)}
$$
Now introducing $C^\dagger = C_4 C^\dagger_3 C_4^\dagger$ we have:
\begin{equation} \label{eq:diag-equality-2}
   \exp( i \pi P'_1 /8) \ldots \exp( i \pi P'_m /8) = \exp( i \pi Z_1 /8) \ldots \exp( i \pi Z_m /8) C  \text{ (up to global phase)} 
\end{equation}
It remains to show that $P'_k$ are diagonal Pauli operators supported on first $m$ qubits.
Above equality up to a global phase implies that 
the channel representation of the right-hand side and left-hand side
must be the same. In particular, they have the same rows that contain $\pm 1$, 
and therefore, according to \cref{lem:unit-rows}, the following sets are equal: 
$$
 \left\{ \text{Pauli matrices that commute with }  P'_1,\ldots,P'_m  \right\} =  \left\{  \text{Pauli matrices that commutes with } Z_1, \ldots, Z_m \right\}
$$
The above set is equal to
$$
\langle Z_1, \ldots, Z_m \rangle \otimes \{I,X,Y,Z\}^{\otimes (n-m)} 
$$
For this reason, each $P'_k$ is a diagonal Pauli operator supported on first $m$ qubits.
We can choose $P'_k$ to be diagonal positive Pauli operators, 
because $\exp(i \pi P'_k / 8) = \exp(-i \pi P'_k / 8) \exp(i \pi P'_k / 4)$ 
and $\exp(i \pi P'_k / 4)$ is a Clifford.
Clifford $C$ must be of the from $A \otimes I_{2^{n-m}}$\footnote{$I_d$ is a $d$-dimensional identity matrix} because 
it can be expressed as a product of two matrices of this form according to \cref{eq:diag-equality-2}.
Applying \cref{lem:diagonal-equality}
completes the proof.
\end{proof}

\section{Some properties of images of Pauli operators and the channel representation} \label{sec:helpful-lemmas}

\diagonalcliffordimage*
\begin{proof}
Because $C$ is diagonal, for all $k$ we have $C Z_k C^\dagger = Z^k$.
For $k\ne j$ Pauli matrices $Z_k$ and $X_j$ commute, therefore image
$C X_j C^\dagger$ commutes with image $C Z_k C^\dagger = Z_k$ too.
Denote $C X_j C^\dagger = P_1\otimes \ldots \otimes P_k$.
The commutativity constraint implies that $P_k \in \{I,Z\}$
for $k \ne j$, which shows required result.
\end{proof}

\hammingweight*
\begin{proof}
Let us first look at the expression for:
$$
\exp( i \pi Z^a /8) X_j \exp( -i \pi Z^a /8)
$$
It is equal to $X_j$ if bit $a(j)$ is zero,
when the bit is one the expression becomes
$$
\frac{1}{\sqrt2}X_j(I + i Z^a)
$$
By repeatedly applying the above observation, we see that
$$
M = X_j \prod_{l : a_l(j) =1} \frac{1}{\sqrt2}(I + i Z^{a_l})
$$
Because all bit-strings $a_l$ are linearly independent, we see that product
$$
\prod_{l : a_l(j) =1} (I + i Z^{a_l})
$$
is equal to the sum of $2^w$ distinct Pauli operators.
\end{proof}

\unitrows*
\begin{proof}
First consider the case when $P_1,\ldots,P_m$ is equal to $Z_1,\ldots,Z_m$
and $C$ is identity. The channel representation of $\exp( i \pi Z_1 /8) \ldots \exp( i \pi Z_m /8)$ is equal to $R^{\otimes m} \otimes I_{4^{m-n}}$
, where $R$ is the channel representation of $\exp(i\pi Z/8)$. See \href{https://arxiv.org/pdf/1308.4134.pdf#page=7}{Equation (4.2)} in~\cite{Gosset2014} for the expression for R. In this case columns and rows indexed by 
$$
\langle Z_1,\ldots,Z_m\rangle \otimes \{I,X,Y,Z\}^{\otimes (n-m)}
$$
contain $\pm 1$ and the rest of the columns and rows do not contain $\pm 1$.
These are exactly the columns indexed by Pauli matrices that commute and anti-commute with $Z_1,\ldots, Z_m$. Let now $C_1$ be a Clifford that maps $Z_1,\ldots,Z_m$ to 
$P_1,\ldots,P_m$ by conjugation:
$$
C_1 \exp( i \pi Z_1 /8) \ldots \exp( i \pi Z_m /8) C^\dagger_1 = \exp( i \pi P_1 /8) \ldots \exp( i \pi P_m /8).
$$
Conjugation by $C_1$ preserves commutativity and anti-commutativity, it also simultaneously permutes rows and columns of the channel representation and flips signs. For this reason, the only rows of channel representation of $\exp( i \pi P_1 /8) \ldots \exp( i \pi P_m /8)$ with $\pm1$ are rows indexed by Pauli matrices that commute with $P_1,\ldots,P_m$. Right-multiplication by Clifford $C$ permutes columns and flips signs, 
therefore rows that contain $\pm 1$ stay the same.
\end{proof}

\tcount*
\begin{proof}
According to \cref{prop:canonical-form}, any unitary $U$ given by \cref{eq:canonical}
requires at most $m$ T gates.
It remains to show that such unitaries require at least $m$ T gates.
It is sufficient to show that at least $m$ T states are required to prepare Choi state
$$
|U\rangle = \frac{1}{\sqrt{2}^n} (I_{2^n} \otimes U) \sum_{k \in \{0,1\}^n} |k\rangle \otimes |k\rangle 
$$
We will use lower-bound on the number of $T$ states needed to prepare a state from \cite{Beverland2020} in terms of dyadic monotone $\mu_2$; see
\href{https://arxiv.org/pdf/1904.01124.pdf#page=27}{Definition 6.2} in \cite{Beverland2020}.
The dyadic monotone is computed using the Pauli spectrum 
$$
 \mathrm{Spec}|\psi\rangle = \left\{ |\langle \psi | P | \psi \rangle|  : P \in \{I,X,Y,Z\}^{\otimes n} \right\}
$$
Recall that in \href{https://arxiv.org/pdf/1904.01124.pdf#page=60}{Appendix 10.1} in~ \cite{Beverland2020} it has been shown that Pauli spectrum of the Choi state $|U\rangle$ is exactly the set of absolute values of entries of the channel representation of $U$.
For this reason Pauli spectrum of $U$ is the same as Pauli spectrum of $T^{\otimes m}\otimes I_{2^{n-m}}$. For this reason $\mu_2 |U\rangle = m/2$ 
and there at least $m$ T states needed to prepare $|U\rangle$.
\end{proof}

\section{Concluding remarks}

A corollary of \cref{prop:canonical-form} and \cref{lem:distinctness} is a canonical form for unitaries 
with $T$ depth $d$. Our canonical form is inspired by a canonical form introduced in \cite{Mosca2021}. Lemma \cref{lem:distinctness} motivates the following definition.

\begin{dfn}
\label{dfn:generators}
Let $n$ be a positive integer and let $m \le n$ be another positive integer, 
define sets 
$$
G_{n,m} = \left\{ e^{ i \pi P_1 /8} \ldots e^{ i \pi P_m /8} : P_1,\ldots,P_m \text{ independent and commuting elements of } \{I,X,Y,Z\}^{\otimes n} \right\}.
$$
Define $G_n = \bigcup_{m = 1}^n G_{n,m}$.
\end{dfn}

\cref{lem:distinctness} shows that all elements of  $G_n$  are distinct up to the right multiplication by a Clifford 
unitary. We have numerically verified this fact on up to four qubits. Sets $G_{n,m}$ are exactly the sub-sets of $G_n$
that contain unitaries of $T$-count $m$. We provide sizes of sets $G_n$ and $G_{n,m}$ in \cref{tab:t-depth-one-sizes}.
Note that the number of unitaries of $T$-depth one is $\# G_n \cdot \#\mathcal{C}_{n}$
and the number of unitaries of $T$-depth one and $T$-count $m$ is $\# G_{n,m} \cdot \#\mathcal{C}_{n}$.
We introduce a canonical form in the following theorem:

\begin{table}[]
    \centering
    { \scriptsize
\begin{tabular}{|c|c|c|c|c|c|c|}
\hline 
Number & Number of unitaries & Number of  & \multicolumn{4}{c|}{Number of unitaries of $T$-depth one and given $T$-count}\tabularnewline
\cline{4-7} \cline{5-7} \cline{6-7} \cline{7-7} 
of qubits & of $T$-depth one & Clifford unitaries $\#\mathcal{C}_{n}$ & $T$-count 1 & $T$-count 2 & $T$-count  3 & $T$-count 4\tabularnewline
\hline 
1 & $3\cdot\#\mathcal{C}_{n}$ & $24$ & $3\cdot\#\mathcal{C}_{n}$ & - & - & -\tabularnewline
\hline 
2 & $60\cdot\#\mathcal{C}_{n}$ & $11520$ & $15\cdot\#\mathcal{C}_{n}$ & $45\cdot\#\mathcal{C}_{n}$ & - & -\tabularnewline
\hline 
3 & $4788\cdot\#\mathcal{C}_{n}$ & $92897280$ & $63\cdot\#\mathcal{C}_{n}$ & $945\cdot\#\mathcal{C}_{n}$ & $3780\cdot\#\mathcal{C}_{n}$ & -\tabularnewline
\hline 
4 & $2265420\cdot\#\mathcal{C}_{n}$ & $12128668876800$ & $255\cdot\#\mathcal{C}_{n}$ & $16065\cdot\#\mathcal{C}_{n}$ & $321300\cdot\#\mathcal{C}_{n}$ & $1927800\cdot\#\mathcal{C}_{n}$\tabularnewline
\hline 
\end{tabular}
}
    \caption{Number of unitaries of $T$-depth one on one to four qubits. Results in this table has been verified numerically. }
    \label{tab:t-depth-one-sizes}
\end{table}

\begin{thm}
Let $U$ be an $n$-qubit unitary of $T$-depth $d$, then it can be written as a product
$
 U_1 \ldots U_d C
$  (up to a global phase),
where $U_k$ are from $G_n$~(\cref{dfn:generators}) and $C$ is an $n$-qubit Clifford unitary.
\end{thm}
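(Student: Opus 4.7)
The plan is to induct on $d$, reducing everything to \cref{prop:canonical-form} together with a short conjugation identity on $G_n$. The base case $d=1$ is exactly \cref{prop:canonical-form}, after identifying $U_1 = \exp(i\pi P_1/8)\cdots\exp(i\pi P_m/8) \in G_n$. For the inductive step, decompose a $T$-depth $d$ unitary as $U = U' V$, where $U'$ has $T$-depth $d-1$ and $V$ has $T$-depth one, and apply the inductive hypothesis to $U'$ together with \cref{prop:canonical-form} to $V$. This gives
\[
U = U_1 U_2 \cdots U_{d-1} C' \cdot U_d'' C''
\]
with $U_1,\ldots,U_{d-1}, U_d'' \in G_n$ and $C', C''$ Clifford. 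It then remains to absorb $C'$ into $U_d''$, yielding $U = U_1 \cdots U_d \cdot \tilde C$ for a single Clifford $\tilde C$.

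The key step is the conjugation identity: for any Clifford $D$ and any $V = \exp(i\pi P_1/8)\cdots \exp(i\pi P_m/8) \in G_n$, there exist $\tilde V \in G_n$ and a Clifford $\tilde D$ with $D V = \tilde V \tilde D$. To prove it, I compute
\[
D V D^\dagger = \prod_{k=1}^m \exp\bigl(i\pi (D P_k D^\dagger)/8\bigr),
\]
write $D P_k D^\dagger = \epsilon_k Q_k$ with $Q_k \in \{I,X,Y,Z\}^{\otimes n}$ and $\epsilon_k \in \{\pm 1\}$, and observe that the $Q_k$ are still commuting and independent, since conjugation by a Clifford preserves both properties and the paper's definition of independence is insensitive to signs. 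Whenever $\epsilon_k = -1$, apply the identity $\exp(-i\pi Q_k/8) = \exp(-i\pi Q_k/4)\cdot \exp(i\pi Q_k/8)$ used in the proof of \cref{prop:canonical-form}. Since all $Q_j$ commute, each Clifford correction $\exp(-i\pi Q_k/4)$ slides past every remaining exponential, so all the corrections can be collected to the right: $D V D^\dagger = \tilde V \cdot C_{\mathrm{corr}}$ with $\tilde V \in G_n$ and $C_{\mathrm{corr}}$ Clifford. Right-multiplying by $D$ gives $D V = \tilde V (C_{\mathrm{corr}} D)$, which is the required identity.

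The main obstacle is the sign bookkeeping in the conjugation identity: whenever $D P_k D^\dagger$ carries a negative sign one must convert the resulting $\exp(-i\pi Q_k/8)$ into a positive exponential at the cost of an extra Clifford factor, and then verify that all such corrections can be slid to one end of the product. Mutual commutation of the $Q_k$ makes the sliding immediate, and once the identity is in hand the inductive assembly of the canonical form is routine.
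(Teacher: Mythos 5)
Your proof is correct and follows essentially the route the paper sketches: the paper's proof is just a pointer to \cref{prop:canonical-form} and to the canonical form of \cite{Gosset2014}, and your induction on $d$, with the conjugation identity $DV = \tilde V\tilde D$ (absorbing the $\pm$ signs of $DP_kD^\dagger$ via the Clifford factors $\exp(-i\pi Q_k/4)$) is exactly the "slight generalization" of the argument in \cref{prop:canonical-form} that the paper has in mind. You simply supply the details the paper leaves implicit.
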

\begin{proof}
The proof is similar to \cref{prop:canonical-form} and is a slight generalization of the proof
of a canonical form in \cite{Gosset2014}.
\end{proof}

Interestingly, the sets $G_n$ we use for our canonical form has an optimal size. 
More precisely, the following proposition is true:

\begin{prop}
Suppose there exist a family of sets $\tilde{G}_n$ such that any $n$-qubit unitary of $T$-depth $d$ can be written as a product
$
 U_1 \ldots U_d C
$  (up to a global phase),
where $U_k$ are from $\tilde{G}_n$~(\cref{dfn:generators}) and $C$ is an $n$-qubit Clifford unitary, 
then $\# \tilde{G}_n \ge \# G_n$.
\end{prop}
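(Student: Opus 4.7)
The plan is to specialize the hypothesis at $d = 1$ and reduce the statement to a pigeonhole comparison against the count already established in \cref{thm:t-depth-one-count}. Setting $d = 1$, the hypothesis says that every $T$-depth one unitary $U$ admits a factorization $U = U_1 C$ (up to a global phase) with $U_1 \in \tilde{G}_n$ and $C$ a Clifford. This gives a surjection
$$
\Phi: \tilde{G}_n \times \mathcal{C}_n \longrightarrow \{\text{$T$-depth one unitaries on $n$ qubits}\}/\text{(global phase)}, \quad (U_1, C) \mapsto U_1 C.
$$
Hence the codomain has cardinality at most $\#\tilde{G}_n \cdot \#\mathcal{C}_n$.

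Next I would invoke the counts that the paper has already assembled. By \cref{prop:canonical-form}, every $T$-depth one unitary can be written as $V C'$ for some $V \in G_n$ and Clifford $C'$; by \cref{lem:distinctness}, two such factorizations $V_1 C_1'$ and $V_2 C_2'$ agree up to a global phase only when $V_1 = V_2$ (the unordered sets $\mathcal{P}$ of Pauli exponents coincide and give the same product, since within each $G_{n,m}$ the exponentials commute because the $P_j$ commute) and $C_1' = C_2'$ up to a global phase. Therefore the number of $T$-depth one unitaries equals exactly $\#G_n \cdot \#\mathcal{C}_n$.

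Combining the two observations, $\#\tilde{G}_n \cdot \#\mathcal{C}_n \ge \#G_n \cdot \#\mathcal{C}_n$, so dividing by $\#\mathcal{C}_n$ yields $\#\tilde{G}_n \ge \#G_n$, as required.

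The main point of care, rather than an obstacle, is handling the "up to a global phase" convention uniformly on both sides of the inequality: the surjection $\Phi$ must be understood as landing in equivalence classes modulo global phase, and the distinctness of \cref{lem:distinctness} must be used in the same equivalence class setting to guarantee the $\#G_n \cdot \#\mathcal{C}_n$ lower bound. A minor subtlety is that an element $U_1 \in \tilde{G}_n$ need not itself be a $T$-depth one unitary or belong to $G_n$; the proof does not require any such containment, only that the image of $\Phi$ is large enough, which is exactly what the hypothesis guarantees.
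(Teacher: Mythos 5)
Your proposal is correct and matches the paper's own argument: specialize to $d=1$, bound the number of $T$-depth one unitaries above by $\#\tilde{G}_n \cdot \#\mathcal{C}_n$, equate it to $\#G_n \cdot \#\mathcal{C}_n$ using the earlier canonical-form and distinctness results, and cancel $\#\mathcal{C}_n$. You simply spell out the surjection and the global-phase bookkeeping that the paper's one-line proof leaves implicit (noting only that distinctness across different $T$-counts $m$ also uses \cref{lem:t-count}, as in the proof of \cref{thm:t-depth-one-count}).
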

\begin{proof}
For any family of sets  $\tilde{G}_n$, the number of $T$-depth one unitaries must be upper-bounded by $\# \tilde{G}_n \cdot \# \mathcal{C}_n$, therefore $\# \tilde{G}_n \cdot \# \mathcal{C} \ge \# G_n \cdot \# \mathcal{C}_n$.
\end{proof}

\printbibliography

\end{document}